\@date \else {\vskip6ex \centering\@date\par\vskip1ex}\fi
\else \@footnotetext{\@setdate}\fi}
\let\uppercasenonmath\@gobble% disables title uppercase
\newcommand{\be}{\begin{equation}} 
\newcommand{\ee}{\end{equation}}
\newcommand{\D}{{\mathrm{D}}}
\newcommand{\G}{{\mathcal{G}}}
\newcommand{\pp}{{\partial}}
\newcommand{\fG}{{\mathrm{Lie}(\G)}}
\newcommand{\RR}{\mathds{R}} %Isham-style
\newcommand{\HH}{\mathcal{H}}
\title{The Gauge Argument: A Noether Reason}
\date{\today\\Forthcoming in \emph{The Physics and Philosophy of Noether's Theorems}, Read, Roberts and Teh (Editors), Cambridge University Press}
\author{Henrique Gomes}
\address{\vspace{-0.8pc}University of Cambridge\\Trinity College, CB2 1TQ, United Kingdom}
\email{\href{mailto:gomes.ha@gmail.com}{gomes.ha@gmail.com}}
\author{Bryan W. Roberts} 
\address{\vspace{-0.8pc}London School of Economics \& Political Science\\Houghton Street, London WC2A 2AE, United Kingdom}
\email{\href{mailto:b.w.roberts@lse.ac.uk}{b.w.roberts@lse.ac.uk}}
\author{Jeremy Butterfield}
\address{\vspace{-0.8pc}University of Cambridge\\Trinity College, CB2 1TQ, United Kingdom}
\email{\href{mailto:jb56@cam.ac.uk}{jb56@cam.ac.uk}}
\begin{document}
\setstretch{1.0}
\maketitle 

\begin{abstract}
Why is gauge symmetry so important in modern physics, given that one must eliminate it when interpreting what the theory represents? In this paper we discuss the sense in which gauge symmetry can be fruitfully applied to constrain the space of possible dynamical models in such a way that forces and charges are appropriately coupled. We review the most well-known application of this kind, known as the `gauge argument' or `gauge principle', discuss its difficulties, and then reconstruct the gauge argument as a valid theorem in quantum theory. We then present what we take to be a better and more general gauge argument, based on Noether's second theorem in classical Lagrangian field theory, and argue that this provides a more appropriate framework for understanding how gauge symmetry helps to constrain the dynamics of physical theories.
\end{abstract}

%\tableofcontents

\setstretch{1.4}
\section{Introduction}\label{sec:intro}

All interpretations of modern gauge theories adopt two core assumptions at their foundation. The first is that gauge symmetry arises when there are more variables in a theory than there are physical degrees of freedom. Hence the well-known soubriquets: gauge is `descriptive redundancy', `surplus structure', and `descriptive fluff'. Correspondingly, considerable effort has been devoted to techniques for eliminating gauge redundancy in order to appropriately interpret gauge theories.\footnote{Cf.\ \citet{earman2002g,earman2003t,earman2004l}, \citet{healey2007book} and \citet{RosenstockWeatherall2016c,rosenstockweatherall2018e}. To some extent we agree: see \citet{gomesriello2020theta} in response to \citet{dougherty2020cp}.} The second assumption is that a theory with gauge symmetry constitutes the gold standard of a modern physical theory: witness the gauge symmetries invoked in the Standard Model. This leads to a remarkable \emph{puzzle of gauge symmetry:} why is gauge symmetry so ubiquitous? We do not aspire to give a single, ultimate answer. The purpose of this paper is to articulate one answer to this question: namely, that gauge symmetry provides a path to building appropriate dynamical theories --- and that this rationale invokes the theorems of Emmy \citet{noether1918a}.

Of course, a number of answers --- alternatives to simple eliminativist interpretations of gauge --- have already been articulated. For example, gauge theories provide a convenient calculational technique, as in the use of a potential $A$ in classical electromagnetism. On the other hand, it would be odd if the ``great gauge revolution'' turned out to be, {\em au fond}, a matter of calculational convenience; and indeed there is more to the story. More importantly: in many cases gauge symmetry cannot be eliminated without also eliminating the possibility of local Lorentz invariance and Lorentz invariant quantities.

Gauge symmetries can also encode important physical information, in spite of their being symmetries. The best-known example, which is vivid because of its experimental significance, is the effect of \citet{aharonovbohm1959a}, in which non-local information (of a different kind than due to quantum entanglement) can be contained in the gauge potential.\footnote{Other more recent examples include \citet{rovelli2014wg,rovelli2020g} and \citet{gomes2019b,gomes2020des,gomes2020h}, who emphasise the role of gauge in characterising the coupling of systems and regions, respectively; and \citet{nguyentehwells2020g}, who emphasise its role in defining certain local gauge fields. For a history of the early debate on the AB Effect see \citet{hiley2013ab}. Philosophers have recently focused on questions about the locality and reality of the gauge potential \cite[cf.][]{healey1997ab,belot1998em,maudlin1998ab,healey1999ab,nounou2003ab,mattingly2006gm,healey2007book,lyre2009ab,pitt-2009discussion,myrvold2011s,wallace2014ab} and \citet{mulder2021g}. More recently, \citet{shech2018ab} and \citet{earman2019ab} have challenged the idealisations associated with the Aharonov-Bohm effect, and \citet{dougherty2020n} has defended them.} But we will be concerned with the two theorems of Emmy \citet{noether1918a}.\footnote{For details on the historical development of Noether's theorems see \citet{ks2011noether}. For a modern statement of the first and second theorems, cf.\ \citet{olver1993}, Theorems 5.58 (p. 334) and 5.66 (p. 343) respectively.} 
 
Noether's first and better-known theorem (commonly called simply \emph{Noether's theorem}) implies that global (or what we will call {\em rigid}) symmetries of a classical Lagrangian field theory --- i.e. symmetries in which the redundancy is specified in exactly the same way at all spacetime points --- correspond to charges that are conserved over time, such as energy and angular momentum. For example, the conservation of an electron's charge can be viewed as arising from the (redundant) global phases of the electron's wavefunction. But we will be equally concerned with Noether's second theorem, which is about  local (or what we will call {\em malleable}) gauge symmetries --- meaning that the specified redundancy varies between spacetime points. Agreed: this theorem's physical significance is of course already well recognized, including in the philosophical literature (\citet{bradingbrown2000n,bradingbrown2003ch}, \citet{brading2002s}). In particular, a recent line of work shows how such malleable gauge symmetries encode relationships between spatial or spacetime regions, and thus between parts and wholes in a field theory.\footnote{See \citet{friedeldonnelly2016l}, \citet{gomes2020des,gomes2020h} and \citet{gomesriello2020}, in response to the discussions of `direct empirical significance' in \citet{bradingbrown2004g} and \citet{greaveswallace2014e}.}

In this paper, we will urge that these two theorems give us a further answer to the puzzle, `why gauge?' It is an established, indeed conventional, answer amongst practising physicists. For it is implicit in the well-known \emph{gauge argument} or the \emph{gauge principle} first formulated by Hermann \citet{weyl1929eug}.\footnote{An English translation and extended commentary is given by \citet[Chapter 5]{oraifeartaigh1997}.} This argument begins with an assumption of local gauge symmetry, and then claims to `derive' the form of the dynamics of quantum theory in a way that exhibits `minimal coupling' to an electromagnetic potential. We claim that this is an instance of a much more general role for gauge, which has not been at all discussed in the philosophical literature: gauge symmetry supports theory construction, in particular by constraining the space of models to those in which charges appropriately couple to forces. Although some philosophers like \citet{bradingbrown2003ch} have pointed out the role of gauge symmetry in theory construction, it is this last coupling of charges to forces that we would like to highlight, which provides the answer to the puzzle of gauge symmetry that we will advocate here.

As experts will be quick to note: the gauge argument in its common textbook form is fraught with difficulties. However, our argument is that these difficulties can be overcome; and indeed that there is a more general gauge argument available for use in the construction of physical theories. We thus proceed in Section \ref{sec:ga-and-woes} to rehearse the usual gauge argument and its woes. In Section \ref{sec:ga-as-a-thm} we offer a glimmer of hope, by reconstructing the gauge argument as a theorem of quantum theory, which we argue vindicates to some extent its use in the formulation of quantum electrodynamics.

The real limitation of the textbook gauge argument, as we shall see, is that it does not reflect the generality of the kind of argument that physicists typically use. Thus, in Section \ref{sec:ANoether}, we present a much more general gauge argument, which we call the \emph{Noether gauge argument}, in the context of classical Lagrangian field theory. The key to understanding this argument is the combined use of \emph{both} Noether's first \emph{and} second theorem. In the first step, one applies Noether's first theorem to establish the conservation of charge. In the second step, one makes use of the power of Noether's second theorem, to infer specific interpretive information about how these charges couple to gauge fields. We draw out and clarify what that information is, in the presence of various kinds of symmetries that are sometimes referred to as `gauge', in order to illustrate the precise extent to which the gauge argument can be fruitfully used to constrain physical theories. 

\section{The gauge argument and its critics}\label{sec:ga-and-woes}

The textbook gauge argument or gauge principle uses gauge invariance to motivate a quantum theory of electromagnetism. We begin Section \ref{subs:beware} with a brief presentation of this argument as it is usually presented. Classic textbook statements can be found in \citet[\S 6.14]{schutz1980g} \citet[\S 4.2]{gc1989dg}, and \citet[\S 3.3]{ryder1996qft}, among many other places. Then in Section \ref{subs:criticisms} we assess it. The argument has been discussed in the form below by philosophers as well, such as \citet{teller1997m,teller2000gauge}, \citet{brown1998a}, \citet{martin2002g}, and \citet[\S 2]{wallace2009anti}. In spite of the criticisms, we will argue in Section \ref{sec:ga-as-a-thm} that a grain of truth remains in the gauge argument.

\subsection{Beware: Dubious arguments ahead}\label{subs:beware}

We begin by describing a quantum system with the Hilbert space $L^2(\RR^3)$ of wavefunctions, recalling that a unique pure quantum state is represented not by vector, but by a `ray' of vectors related by a complex unit. This implies that the transformation $\psi(x)\mapsto e^{i\theta}\psi(x)$ for some $\theta\in\RR$, referred to as a `global phase' transformation, acts identically on rays, and is in this sense an invariance of the quantum system. But now, the story goes, suppose we replace this with a `local phase' transformation $\psi(x)\mapsto e^{i\phi(x)}\psi(x)$, in which the constant $\theta$ is replaced with a function $\phi:\RR^3\rightarrow\RR$, or indeed with a smooth one-parameter family of such functions $\phi_t(x)$ for each $t\in\RR$. This transformation is `local' in the sense that its values vary smoothly across space and time. The corresponding Hilbert space map $W_\phi:\psi\mapsto e^{i\phi}\psi$ does not act identically on rays. However, one might still wish to postulate that this transformation has no `physical effect' on the system, or is `gauge'. Various motivations for this step are given in the textbooks, often with vague references to general covariance of the kind found in general relativity: which we will return to shortly. But to mimic the standard presentation, we will simply press forward, referring to $W_\phi:\psi\mapsto e^{i\phi}\psi$ as a \emph{local} or \emph{malleable} \emph{gauge transformation}.

The main premise of the argument is to assume that the Schr\"odinger equation must be invariant under this local phase transformation. But, for the free non-relativistic Hamiltonian in the Schr\"odinger (position) representation, this is not the case.\footnote{Obvious variations of the argument exist for relativistic wave equations too \citep[cf.][\S 3.3]{ryder1996qft}.} Writing $\psi_t(x):=e^{-itH}\psi(x)$ with $H = \tfrac{1}{2m}P^2$, one finds that $W_\phi:\psi\mapsto e^{i\phi_t(x)}\psi$ transforms the Schr\"odinger equation to $i\tfrac{d}{dt}\left(e^{i\phi_t(x)}\psi_t(x)\right) = \tfrac{1}{2m}P^2e^{i\phi_t(x)}\psi_t(x)$, which is equivalent\footnote{The LHS is $i\tfrac{d}{dt}e^{i\phi_t(x)}\psi_t(x) = e^{i\phi_t(x)}\left(-\tfrac{d\phi}{dt} + i\tfrac{d}{dt}\right)\psi_t(x)$. For the RHS, use the fact that $e^{-i\phi_t(x)}Pe^{i\phi_t(x)}=P+\nabla\phi_t(x)$ (cf.\ Footnote \ref{fn-bwr-p-calc}), and so $e^{-i\phi_t(x)}P^2e^{i\phi_t(x)} = (e^{-i\phi_t(x)}Pe^{i\phi_t(x)})^2 = (P+\nabla\phi_t(x))^2$. Thus the RHS is $\tfrac{1}{2m}P^2e^{i\phi_t(x)}\psi_t(x) = e^{i\phi_t(x)}\tfrac{1}{2m}(P+\nabla\phi_t(x))^2\psi_t(x)$. Multiplying both sides on the left by $e^{-i\phi_t(x)}$ and rearranging then gives the result.\label{eq1-calc}} to the statement that,
\begin{equation}\label{g-tr-schr}
 i\tfrac{d}{dt}\psi_t(x) = \left(\tfrac{1}{2m}(P+\nabla\phi_t)^2 + \tfrac{d\phi_t}{dt}\right)\psi_t(x).
\end{equation}
Instead of preserving the Schr\"odinger equation, a gauge transformation produces the additional terms $\nabla\phi_t$ and $\tfrac{d\phi_t}{dt}$ in the Hamiltonian.

To correct this situation, the big move of the gauge argument is to introduce a vector  $A = (A_1,A_2,A_3)$ and a scalar $V$, which are assumed to behave under the gauge transformation as,
\begin{align}\label{AV-trans-rules}
  A\mapsto A+\nabla\phi_t, && V\mapsto V-\tfrac{d\phi_t}{dt}.
\end{align}
This has the form of the familiar gauge freedom of the electromagnetic four-potential that leaves the electromagnetic field unchanged.

To restore invariance of the Schr\"odinger equation under gauge transformations, one thus apparently needs only to assume that the Hamiltonian is not free, but rather given by,
\begin{equation}\label{eq:mc-ham}
  H = \sum_{r=1}^3\tfrac{1}{2m}(P_r-A_r)^2 + V,
\end{equation}
which is known as the \emph{minimally coupled} Hamiltonian. For, replacing the Hamiltonian in the Schr\"odinger equation with this one, we find that the transformation rules for $A$ and $V$ perfectly compensate for the extra terms appearing in Equation \eqref{g-tr-schr}. Thus, gauge invariance of the Schr\"odinger equation is obtained, provided the Hamiltonian contains interaction terms $A$ and $V$ that behave like the 3-vector potential $A$ and scalar potential $V$ for an electromagnetic field.

With an eye towards a modern gauge theory formulated as a vector bundle with a derivative operator, it is even possible to interpret the potentials $A$ and $V$ as associated with a change of derivative operator: writing $\pp_\mu := (\tfrac{d}{dt},\nabla)$ and $A_\mu = (V,A)$, one finds that the procedure above is equivalent to replacing $\pp_\mu$ with,
\begin{equation}\label{eq:cov-der}
  D_\mu := \pp_\mu + iA_\mu = (\tfrac{d}{dt} + iV,\nabla + iA) = (D_t,D).
\end{equation}
This is commonly referred to as a `covariant derivative'. Then, substituting $\tfrac{d}{dt}\mapsto D_t$ and $\nabla\mapsto D$ into the free Schr\"odinger Equation $i\tfrac{d}{dt}\psi = \tfrac{1}{2m}\nabla^2\psi$ and rearranging, we derive the minimally coupled Hamiltonian of Equation \eqref{eq:mc-ham}. Accordingly, this choice of Hamiltonian is sometimes advocated, for example by \citet{lyre2000gep}, on the basis of a `generalised equivalence principle', according to which electromagnetic interactions with all matter fields ``can be transformed away''.\footnote{This principle arises in particular on a principal fibre bundle formulation of gauge theory; for philosophical appraisals, see \citet{lyre2000gep}, \citet[\S 5]{weatherall2016ymgr}, and \citet[Ch. 6.3]{healey2007book}.} In short, it appears as if minimal electromagnetic coupling has been derived out of nothing: or at least, from an assumption of gauge invariance.
  
\subsection{Criticisms of the gauge argument}\label{subs:criticisms}

That is how the story is usually presented. We agree: it is far from water-tight. The argument begins with a system with a global symmetry, gratuitously generalises it to a local symmetry --- which, to emphasise, was not required for mathematical consistency or for empirical adequacy --- and then, in order to fix the ensuing non-invariance of the governing equations, proceeds to conjecture a new force of nature. To put it uncharitably: the argument fixes a problem that didn't exist by conjecturing a redundant field, and then turns this game around, claiming to come out successfully by `retrodicting' the existence of electromagnetism. More charitably: the gauge argument suffers from at least three categories of concerns. We will set out each of these three concerns here. In Section \ref{sec:ga-as-a-thm} we will then offer a glimpse of how the first two can be answered, and in Section \ref{sec:ANoether} present an alternative {\em Noether gauge argument} that answers them entirely.

The first category of concerns is the gauge argument's claim to have derived a dynamics that is specifically electromagnetic in nature. Although a formal set of operators $A_\mu=(V,A)$ have been included in the dynamics, no evidence is given that these operators take the form required for any \emph{specific} electromagnetic potential, or that the coupling to $A_\mu$ will be proportional to a particle's charge $e$, or even that $A_\mu$ is non-zero. And if they could be shown to be non-zero, then as \citet[p.210]{wallace2009anti} rightly asks: ``how do neutral particles fit into the argument?'' A minimally coupled dynamics does not to apply to neutral particles, and yet since the gauge argument never mentioned or assumed anything about charge, it presumably is intended to apply to them.

This concern can be assuaged by scaling back the conclusion of the gauge argument: its aim is not to derive any particular electromagnetic interaction, but rather to \emph{constrain} the dynamics so as to be compatible with gauge invariance. This leaves open the specific character of $A_\mu$, and indeed even the question of whether it is zero. Although not all authors adopt this attitude towards the gauge argument, we advocate it as the preferable attitude, and will develop it in more detail in the subsequent Sections.

A second category of problems arises out of the free-wheeling argumentative style of the gauge argument. For example, it is not a strict deductive derivation of either the electromagnetic potential or the dynamics. At best, the gauge argument appears to show that one \emph{can} adopt a minimally coupled Hamiltonian in order to assure gauge invariance. But this does not ensure that one \emph{must} do so: the door appears to be left open for other dynamics to be gauge invariant, but without taking the minimally coupled form that the gauge argument advocates. As \citet[p.S230]{martin2002g} writes: ``The most I think we can safely say is that the form of the dynamics characteristic of successful physical (gauge) theories is \emph{suggested} through running the gauge argument.''

Another example of free-wheeling argumentation is in the motivation for requiring the local gauge transformations $W_\phi:\psi\mapsto e^{i\phi} \psi$ to be symmetries. Sometimes a preference for this transformation over global phase transformations is dubiously motivated by a desire to avoid superluminal signalling.\footnote{For example, \citet[p.93]{ryder1996qft} writes: ``when we perform a rotation in the internal space of $\phi$ at one point, through an angle $\Lambda$, we must perform the same rotation at all other points at the same time. If we take this physical interpretation seriously, we see that it is impossible to fulfil, since it contradicts the letter and spirit of relativity, according to which there must be a minimum time delay equal to the time of light travel.'' For a detailed critique, see \citet[p.S227]{martin2002g}.} In other cases it is motivated by the coordinate invariance of a spatial coordinate system. But as \citet[p.210]{wallace2009anti} points out, no reason is given as to why we do not similarly consider local transformations of configuration space, momentum space, or any other space, to be symmetries. Nor is there any clear reason why the $U(1)$ symmetry of electromagnetism is chosen as the global symmetry motivating the move to the local symmetry, as opposed (say) the $SU(3)$ symmetry of the strong nuclear force.

We will claim that most of these problems can be entirely solved. In the first place, the gauge argument can in fact be tightened and turned into a valid derivation, as we will show in the next Section. Not only is it that one \emph{can} adopt the minimally coupled Hamiltonian in the presence of gauge invariance, but one \emph{must} do so, when gauge invariance is viewed in terms of a particular constraint on the `velocity observable', in a sense we will precisely define. We will similarly argue that the postulate that local gauge transformations are symmetries in a certain sense --- namely, that they are unitary operators --- is not really a postulate, but a formal fact about the framework in which these transformations are presented.

Regarding the generalisation of the gauge argument to other global symmetry groups beyond electromagnetism, we wholeheartedly agree with Wallace: one should expect, and indeed we will argue in Section \ref{sec:ANoether}, that an appropriate generalisation of the gauge argument can also be applied to these more general gauge groups.

Our approach here speaks to a third category of concerns, that the gauge argument is awkwardly placed as an argument for a quantum theory of electromagnetism. The construction of a covariant derivative operator suggested by the gauge argument is most appropriately carried out not in quantum field theory, but in the classical Yang-Mills theory of principal fibre bundles. Here too we agree with Wallace:
\begin{quote}
  ``In fact, it seems to me that the standard argument feels convincing only because, when using it, we forget what the wavefunction really is. It is not a complex classical field on spacetime, yet the standard argument, in effect, assumes that it is. This in turn suggests that the true home of the gauge argument is not non-relativistic quantum mechanics, but classical field theory.'' \citep[p.211]{wallace2009anti}
\end{quote}
Indeed, it is remarkable that in the presentation of the gauge argument above, the role of the `rigid' or `global' $U(1)$ symmetry is hardly substantial: only the local malleable symmetries play any substantial role in the argument. This is an oddity to be sure, though one that we will correct shortly.

In Section \ref{sec:ANoether}, we will switch perspectives from the \emph{verdammten Quantenspringerei} to the context of classical Lagrangian field theory, and propose a framework that substantially clarifies the roles of rigid gauge symmetries, of malleable gauge symmetries, and of their relationship, which we will call the `Noether gauge argument'.

But before we develop this argument, we would like to first show how the first two categories of concern are unwarranted: a more rigorous formulation of the textbook gauge argument is possible, which dispels any worries about a free-wheeling, under-motivated argument. Its only serious shortcoming in this more rigorous form, as we will see, is its lack of generality. This will be corrected in Section \ref{sec:ANoether}, when we develop an analogous argument in classical field theory. It is not a straightforward generalisation of Section \ref{sec:ga-as-a-thm}'s theorem, but rather a `cousin' of it. Namely, the theme of both results is that malleable gauge symmetries are used to constrain the dynamics. 

\section{The gauge argument as a theorem}\label{sec:ga-as-a-thm} 

\subsection{Gauge transformations and probabilities}

Suppose we view the textbook gauge argument as beginning with a strongly continuous one-parameter unitary representation $t\mapsto U(t)=e^{-itH}$ on the Hilbert space $\HH=L^2(\RR^3)$, but we do not yet know the form of the Hamiltonian $H$. Here we will adopt the vector notation $x=(x_1,x_2,x_3)$, and similarly $Q=(Q_1,Q_2,Q_3)$, with the latter representing the vector position, where for each $i=1,2,3$ we have a self-adjoint operator $Q_i$ defined by $Q_i\psi(x)=x_i\psi(x)$. The aim of the gauge argument, as we see it, is then to use gauge symmetry to constrain the possible dynamics that are available.

The first step is to observe a fact in this framework --- \emph{not} an assumption, but a consequence! Namely: the local gauge transformations are symmetry operators that transform position and momentum as one would expect a gauge transformation to do.\footnote{By `symmetry' we mean \emph{unitary} operators, sometimes referred to as `kinematic symmetries' \citep[cf.][Chapter 13]{jauchQM}. Antiunitary operators are symmetries of time-reversing transformations \citep[see][]{roberts2017a,roberts2021tr}, but will play no role in this discussion.} That is: for any smooth function $\phi:\RR^3\rightarrow\RR$, whose gradient $\nabla\phi$ we can think of as smoothly deforming space however we wish, there is a Hilbert space operator that preserves inner products, i.e. a unitary operator, given by:
\begin{equation}\label{eq:kin-mal-3d}
  W_{\phi}\psi(x) := e^{i\phi(x)}\psi(x)
\end{equation}
for all $x\in\RR^3$. A similar fact applies in four-dimensions: defining a one-parameter set $\phi_t(x)$ of such transformations, there is a `malleable symmetry' or  unitary $W_{\phi}$ defined by,
\begin{equation}\label{eq:kin-mal-4d}
  W_{\phi_t}\psi(x) := e^{i\phi_t(x)}\psi(x).
\end{equation}
for all $\psi\in\HH$, and for all $x\in\RR^3$ and $t\in\RR$. These unitaries $W_{\phi}$ transform position $Q=x$ and momentum $P =-i\nabla$ as one would expect a gauge transformation to do:\footnote{The first equation follows just from $\phi$ being a function of position. The second equation follows from the fact that $PW_{\phi}^*\psi = -i\nabla \left(e^{-i\phi}\psi\right) = -i\left(e^{-i\phi}\nabla\psi - i(\nabla \phi)e^{-i\phi}\psi\right) = e^{-i\phi}(-i\nabla - \nabla \phi)\psi = W_\phi^*(P - \nabla\phi)\psi$, for all $\psi\in L^2(\RR^3)$.\label{fn-bwr-p-calc}}
\begin{align}\label{eq:gauge-momentum-transf}
  W_\phi Q W^*_\phi = Q &&  W_\phi P W_\phi^* = P - \nabla \phi.
\end{align}

Some presentations (rather mysteriously) even say that these facts require a postulate of general relativity; as when \citet[p.219]{schutz1980g} writes, ``[t]hese turn out to be the appropriate equations for that system in \emph{general} relativity'' (cf.\ also \citet[p.11]{lyre2000gep}). 

We would like to emphasise that, for the purpose of constructing a unitary operator that implements the transformations in Equation \eqref{eq:gauge-momentum-transf}, \emph{no such arguments are needed}. It is a direct mathematical consequence of the Hilbert space formalism that each malleable transformation $\phi$ gives rise to a unitary $W_\phi=e^{i\phi}$, and that these unitaries behave like gauge transformations with respect to position and momentum. A new physical postulate is only required when malleable transformations are used to constrain the dynamics: which we consider next.

\subsection{A rigorous quantum gauge argument}

We are now ready to view the gauge argument as providing a constraint on the possible dynamics for quantum theory, on the basis of a certain assumption of gauge invariance. The postulate of gauge invariance that we will state is in the spirit of the textbook gauge argument. We begin by clarifying exactly what we take this to mean.

As before, we begin with a unitary dynamics $t\mapsto U_t = e^{-itH}$ on the Hilbert space $\HH=L^2(\RR^3)$, for which we do not yet know the Hamiltonian $H$. The 3-velocity in this context is defined to be the rate of change of the position operator in the Heisenberg picture with respect to this unitary group. Thus, it is given by,
\begin{equation}\label{eq:velocity-obs}
  \dot{Q} := \tfrac{d}{dt}U^*_tQU_t\big|_{t=0} = i[H,Q],
\end{equation}
where $Q=(Q_1,Q_2,Q_3)$ is defined as in the Schr\"odinger representation by $Q_r\psi(x)=x_r\psi(x)$ for each $r=1,2,3$, for all wavefunctions $\psi$ in its domain.

Our central postulate for how malleable symmetries constrain the dynamics, in pictorial terms, is that if a smooth transformation $\phi:\RR^3\rightarrow\RR$ is applied, then the 3-velocity should compensate for the extra term $\nabla\phi$ that arises, as shown in Figure \ref{fig:malleable}. One must remove the contribution of $\nabla\phi$ from the value of the velocity.\footnote{In Equation \eqref{eq:gauge_trans} of Section \ref{sec:ANoether} we propose a similar postulate to characterise gauge invariance in the context of classical field theory; where we note that it is also possible to consider the contribution of higher-order terms, in which case this definition can be viewed as a first-order approximation.} Thus, if $W_\phi=e^{i\phi}$ is the unitary transformation associated with $\phi$, then the assumption says:
\begin{equation}\label{eq:3vel-ga}
  e^{i\phi}\dot{Q}e^{-i\phi} = \dot{Q} - \nabla\phi.
\end{equation}
This is close enough to the meaning of a transformation by a smooth function that it is somewhat surprising that it provides any constraint on the dynamics, let alone enough to establish minimal coupling. 

\begin{figure}[tbh]\begin{center}
    \includegraphics[width=0.7\textwidth]{./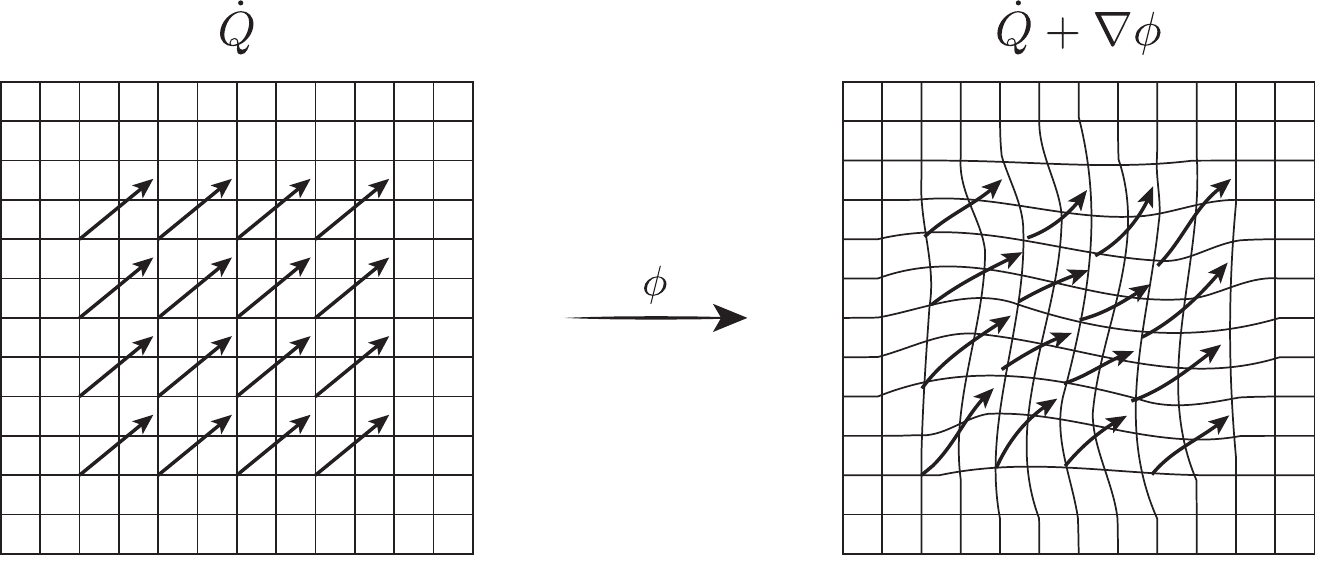}
    \caption{The constraint of malleable symmetries is that the 3-velocity $\dot{Q}$ must transform so as to remove the contribution of the gradient $\nabla\phi$ of a smooth function.}\label{fig:malleable}
  \end{center}\end{figure}

Although Equation \eqref{eq:3vel-ga} is thus not a priori, it is natural and rich in consequences. We formalise this in the following.\footnote{A proof is provided in the \hyperref[sec:appendix]{Appendix}. The ideas of this proof are a turned-around cousin of \citet{jauch1964a} and \citet[\S 13-5]{jauchQM}, an analysis of which is in \citet[Appendix A]{roberts-dissertation}.}

\begin{restatable}{thm}{GAhamiltonian}\label{thm:ga} 
In the Schr\"odinger (position) representation on $L^2(\RR^3)$ with dynamics $t\mapsto U_t=e^{-itH}$, let $\dot{Q}_r := \tfrac{d}{dt}(U_t^*QU_t)\big|_{t=0} = i[H,Q_r]$. Suppose that for every smooth one-parameter set of smooth functions $\phi_t:\RR^3\rightarrow\RR$, at time $t=0$,
  \begin{equation}\label{WAssptnThm1}
    W_\phi\dot{Q}W_\phi^* = \dot{Q}-\nabla\phi,
  \end{equation}
where $W_\phi := e^{i\phi_t}$ and $\nabla = (\pp_{x}, \pp_{y}, \pp_{z})$. Then:

  (i) the Hamiltonian is constrained to be of the form,
\begin{equation}\label{eq:ham-form}
  H = \sum_{r=1}^3\tfrac{1}{2m}(P_r - A_r)^2 + V
\end{equation}
for some $m>0$ and some functions (i.e. operators) $A_r(x)$ and $V(x)$ that depend only on position, and not on momentum;

(ii) the transformed unitary group $t\mapsto \tilde{U}_t := W_\phi U_t =: e^{-itH_\phi}$ (where the last equation defines $H_\phi$ as the generator of $\tilde{U}_t$, by Stone's theorem) is given by,
  \begin{equation}\label{eq:gauge}
      H_\phi = \sum_{r=1}^3\tfrac{1}{2m}\left(P_r - A_r^\phi\right)^2 + V^\phi,
    \end{equation}
    where $A_r^\phi = A_r + \nabla\phi$ and $V^\phi = V -\tfrac{d\phi}{dt}$; and

    (iii) the Schr\"odinger equation is `gauge invariant' under $W_\phi$, in that if $i\tfrac{d}{dt}\psi(t) = H\psi(t)$, then $\psi_\phi(t) := W_\phi\psi(t)$ satisfies $i\tfrac{d}{dt}\psi_\phi(t) = H_\phi\psi_\phi(t)$. 
\end{restatable}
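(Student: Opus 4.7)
The plan is to exploit the hypothesis together with the already-noted transformation $W_\phi P W_\phi^* = P - \nabla\phi$ from Equation \eqref{eq:gauge-momentum-transf}, in order to pin down the form of the Hamiltonian in two steps. First I would subtract the two transformation laws to get
\begin{equation*}
W_\phi(\dot{Q} - P)W_\phi^* = \dot{Q} - P \qquad \text{for every smooth } \phi.
\end{equation*}
Hence the vector operator $\dot{Q} - P$ commutes with every malleable unitary $W_\phi = e^{i\phi(Q)}$. Since the family $\{W_\phi\}$ generates (strongly) the maximal abelian algebra of bounded functions of $Q$, any operator commuting with the whole family is itself a function of $Q$. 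Labelling this function $-A(Q) = (-A_1, -A_2, -A_3)$, we read off $\dot{Q}_r = P_r - A_r(Q)$ for each $r$.

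Given this, the second step is to recover $H$ itself. Using $\dot{Q}_r = i[H, Q_r]$ together with the direct commutator computation $i[\tfrac{1}{2m}\sum_s (P_s - A_s)^2, Q_r] = \tfrac{1}{m}(P_r - A_r)$ (the cross terms involving $A$ pose no trouble because $A_s$ commutes with $Q_r$), we see that the operator $K := H - \tfrac{1}{2m}\sum_s(P_s - A_s)^2$ commutes with every $Q_r$. By the same abelian-algebra argument, $K$ must be a function of $Q$; calling it $V$ gives the Hamiltonian of Equation \eqref{eq:ham-form} and establishes part (i). (Any overall scaling by $m$ can be absorbed into the conventions fixing $A$ and $V$.)

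For part (ii), I would compute $W_\phi H W_\phi^*$ directly. Since $A$ and $V$ are functions of $Q$ they commute with $W_\phi$, and $W_\phi (P_s - A_s) W_\phi^* = P_s - A_s - \pp_s\phi = P_s - A^\phi_s$, so $W_\phi H W_\phi^* = \tfrac{1}{2m}\sum_s (P_s - A^\phi_s)^2 + V$. To pick up the extra $-\tfrac{d\phi}{dt}$ term I would differentiate the time-dependent operator $\tilde{U}_t = W_{\phi_t} U_t$ and identify the generator via the product rule: the contribution $i(\pp_t W_{\phi_t})W_{\phi_t}^* = -\tfrac{d\phi_t}{dt}$ combines with $W_\phi H W_\phi^*$ to yield $H_\phi = \tfrac{1}{2m}\sum_s(P_s - A^\phi_s)^2 + V^\phi$, with $V^\phi = V - \tfrac{d\phi}{dt}$, which is precisely Equation \eqref{eq:gauge}. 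Part (iii) then falls out as essentially the same calculation read in the other direction: differentiating $\psi_\phi(t) = W_{\phi_t}\psi(t)$ and substituting the Schr\"odinger equation for $\psi$ reproduces $i\tfrac{d}{dt}\psi_\phi = H_\phi \psi_\phi$.

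The main obstacle I anticipate is functional-analytic rather than algebraic. Both $\dot{Q} - P$ and $H - \tfrac{1}{2m}\sum_s(P_s - A_s)^2$ are unbounded, so the statements ``commutes with every $W_\phi$'' and ``commutes with every $Q_r$'' have to be validated on a common invariant dense domain (Schwartz space is the natural candidate) rather than treated purely formally. The cleanest route is to pass through bounded spectral projections of the position operator so as to deploy the von Neumann double commutant characterization of the maximal abelian algebra; making the ``function of $Q$'' conclusion rigorous for the unbounded operators at hand --- and clarifying in what precise sense the hypothesis \eqref{WAssptnThm1} is to be read --- is where the real technical work lies.
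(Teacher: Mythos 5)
Your proposal is correct and follows essentially the same route as the paper's own proof: deduce from the hypothesis and $W_\phi P W_\phi^* = P-\nabla\phi$ that $\dot{Q}-P$ commutes with all the $W_\phi$, use the simple spectrum of $Q$ (equivalently, the maximal-abelian/bicommutant property you invoke) to conclude it is a function of position $-A(Q)$, repeat the argument on $H-\tfrac{1}{2m}\sum_s(P_s-A_s)^2$ to extract $V(Q)$, and obtain (ii) and (iii) by differentiating $W_{\phi_t}U_t$. The only cosmetic difference is that the paper phrases the first step as $[W_\phi, m\dot{Q}-P]=0$ and specializes to linear $\phi$ before citing the bicommutant theorem, whereas you work with $m=1$ and absorb the scaling into conventions, exactly as the paper implicitly does.
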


In summary: we assume that the 3-velocity transforms  in a given manner under smooth transformations of space ($W_\phi\dot{Q}W_\phi^* = \dot{Q}-\nabla\phi$), and prove a consequence that the dynamics \textit{are constrained} to be given by a minimally coupled Hamiltonian: in which, moreover,  the smooth transformations behave like gauge transformations of the vector and scalar potentials, in that they leave the Schr\"odinger equation invariant.

We take this to largely dispel the concerns about the free-wheeling nature of the textbook gauge argument. Cast in this form, any Hamiltonian that is constrained by malleable transformations as in Equation \eqref{WAssptnThm1} \emph{must} be minimally coupled; and this leads to gauge invariance as in (ii) and (iii) of the Theorem. Of course, this still leaves the specific coupling undetermined up to a choice of vector and scalar potential, including the choice of zero. But this is fully compatible with our perspective that gauge symmetry can be viewed as providing a powerful constraint on the dynamics.

There remains the question of how to view this kind of argument in the context of more general gauge groups. And there remains a question of what role the rigid gauge transformations play in constraining a theory's dynamics. Here we propose a shift to a more natural perspective for dealing with such constraints, which uses the tools of classical Lagrangian field theory. This is the subject of the remainder of this paper.

\section{A Noether Reason for Gauge}\label{sec:ANoether}

\subsection{Overview}

For a more general view of how gauge symmetries constrain the dynamics of a physical theory, we will now, as announced in Section \ref{sec:intro},  make a two-step use of the theorems of Emmy \citet{noether1918a}: the first, and then the second. We will refer to this as the \emph{Noether gauge argument}. Agreed: this is by no means a new observation, since practising physicists use this property of gauge frequently!\footnote{A succinct example is \citet{averyschwab2016n}, who write ``Noether’s second theorem, which constrains the general structure of theories with local symmetry''.} But we believe it is worth highlighting and clarifying exactly the kind of information that can be extracted in various cases, as part of our advocacy (cf.\ Section \ref{sec:intro}) that philosophical discussions of gauge should better recognise gauge's significance for theory construction.

The Noether gauge argument proceeds in two steps. First, we choose a rigid gauge symmetry associated with an arbitrary global gauge group, and propose that its action produces a variational symmetry: by Noether's first theorem, this guarantees the presence of a collection of conserved quantities. But  matter fields do not exist in  isolation: they couple to other `force' fields, and possibly to long-range ones. Thus, in the second step, we introduce such a field and apply Noether's second theorem, `loosening' the rigid symmetries to malleable ones; and we show that this provides three  concrete constraints on the dynamics, viz.\ the vanishing of the three lines in Equation \eqref{eq:vanishing} below. This result is limited: we make no claim that the only way to get these benefits is by malleable symmetries. However, it remains the best known and most tractable way to achieve them.

The interpretation of these constraints can be seen on a case-by-case, or sector-by-sector, basis: we consider their implications for rigid versus malleable symmetries, as well as for $A$-independent versus $A$-dependent ones. Thus in the following Sections we will spell out the consequences of the three constraints for four different sectors of the theory.  In particular, we will find through explicit computation --- adopting only a minor additional assumption of non-derivative coupling --- that when we couple the matter fields to force fields, gauge-invariance guarantees that the Lagrangian for these fields is massless, and so they constitute long-range interactions.\footnote{The formalism equally applies to spin-2, or gravitational, fields; but, apart from some cursory remarks, we will not discuss these.} The generalised Gauss laws thus are guaranteed to relate the content of the matter current within a region to the flux of the other force fields at distant closed surfaces surrounding such a region.

Disclaimers: first, in the interest of clarity and pedagogy, we will not try to incorporate the full generality of Noether's theorems, which is truly extraordinary but over-complicated for our discussion. In its place we make several simplifying assumptions, both about the Lagrangian density and about the action of the gauge group, which are not strictly speaking necessary but which simplify our argument. Second, throughout this discussion, we will follow standard practice and distinguish two equivalence relations for classical fields on a manifold. First, we write `$=$' to denote ordinary equality between fields, irrespective of the satisfaction of the equations of motion, and refer to this as \emph{strong} or \emph{off-shell} equality. Second, given a fixed Lagrangian, we write `$\approx$' to denote equality between fields that holds if the Euler-Lagrange equations are satisfied for that Lagrangian, and refer to this as \emph{weak} or \emph{on-shell} equality.\footnote{This common terminology is due to Dirac \citep[cf.][]{HenneauxTeitelboim1994}.}

\subsection{Field theory and malleable symmetries}
% (BWR) [I found this paragraph hard to follow at this stage, before you introduce the structures below, so propose commenting it out?] To do that, we will assume that the vector-values of $A$ not only admit an action of the infinite-dimensional group of gauge transformations obtained from the Lie algebra in question, such as $\rho:\mathfrak{g}\rightarrow \text{End}(W)$, but that they are Lie-algebra valued one-forms, on which the gauge transformations act in the standard fashion (this changes very little in the equations to come, except that we use a Lie-algebra commutator as opposed to the action $\rho$). 

In the textbook gauge argument, the (rigid) gauge transformations were assumed to be the group of phase transformations $U(1)$, and the (malleable) gauge group was given by its locally-varying analogues. We now lift this restriction and allow the global gauge group to be any compact Lie group $G$, with  Lie algebra $\mathfrak{g}$. We can characterize the malleable (infinitesimal) gauge transformations induced by $G$ (and $\mathfrak{g}$), on a local patch $U$, as the smooth functions from $U$ to $G$ (to $\mathfrak{g}$, respectively).\footnote{The  precise description of gauge transformations requires a brief incursion into the mathematical theory of fiber bundles. A principal fiber bundle is a smooth manifold $P$ that admits a smooth and free action of the group $G$, e.g. $G\times P\rightarrow P\,;\,\,  (g, p)\mapsto g\cdot p$. Spacetime is encoded in the bundle as the collection of orbits $\mathcal{O}$ of the group: $\pi:P\rightarrow M\simeq P/G$, where $\pi$ is the smooth submersion induced by the action of $G$. One can think of the bundle as attaching an orbit space isomorphic to $G$ (but without a preferred identity, much like an affine space) to each point of $M$. Thus $\pi^{-1}(x)=:\mathcal{O}_p=\{g\cdot p,\,\, g\in G\}\simeq G$ for $x\in M$,  but there is no canonical isomorphism between $\pi^{-1}(x)$ and $G$ (it requires the choice of a point $p$) and no  decomposition $P\simeq M\times G$. To talk about gauge transformations as maps in spacetime, we must restrict their domains to subsets of spacetime. Locally, i.e. for subsets $U\subset M$, we can write $\pi^{-1}(U)\simeq U\times G$, once we have choice of trivialisation: $s:U\rightarrow P$ such that $\pi\circ s=\mathrm{Id}_U$. Here $s$ can be thought of as a submanifold that intersects all the group orbits in $\pi^{-1}(U)$ only once. In terms of $s$ we can write the trivialization as $U\times G\rightarrow \pi^{-1}(U);\,\, (x,g)\mapsto g\cdot s(x)$. Global gauge transformatios are diffeomorphisms $\tau\in \mathrm{Diff}(P)$ such that $\tau(g\cdot p)=g\cdot \tau(p)$. Locally, by equivariance, $\tau$ maps orbits to orbits, and so takes one trivializing section $s$ to another $s'$; and if $\tau(s)=\tau'(s)=s'$, then $\tau(p)=\tau'(p)$ for $p\in \pi^{-1}(U)$. Therefore, locally, gauge transformations are \textit{uniquely encoded} by the map from $s$ to $s'$. It is easy to characterize these maps: since $s$ provides a local trivialization, we can write, for all $x\in U$, $ s'(x)=g(x)\cdot s(x)$, for some $g\in C^\infty(U; G)$. Therefore, locally, gauge transformations are of the form $g\in C^\infty(U; G)$. This local construction will be mirrored for vector bundles in footnote \ref{ftnt:vector_bundle}.  \label{ftnt:PFB} } So we can write local (infinitesimal) gauge transformations as $\G=C^\infty(U,G)$ (or  $\fG= C^\infty(U,\mathfrak{g})$, respectively): the group of gauge transformations has a group operation that is just that of $G$, pointwise on $M$, i.e. for $g,g'\in \G$, $gg'(x)=g(x)g'(x)$. The group acts as the adjoint on the algebra, $\mathrm{Ad}:G\rightarrow GL({\mathfrak{g}})$, with $(g, \xi)\mapsto\mathrm{Ad}_g \xi:=g\xi g^{-1}\in \mathfrak{g}$,    and so the action of the algebra on itself is just the Lie algebra commutator, $\mathfrak{g}\times \mathfrak{g}\rightarrow \mathfrak{g}$ with $(\xi, \xi')\mapsto [\xi, \xi']\in \mathfrak{g}$.

 Let $\varphi$ be a map that takes each point of $U$ into $V$,  a vector space of dimension $n$ which is $\varphi$'s value space. We will use $i$ to indicate  components in $V$, e.g. $\varphi_i$. We will take $\varphi$ to represent matter fields. Locally, that is, on appropriate trivializing patches $U\subset M$,  $\varphi$ is a $V$-valued smooth function on $U$, the class of which is $C^\infty(U; V)$.\footnote{Here the mathematically precise description of $\varphi$ is as a section of a vector bundle $\pi:E\rightarrow M$, which one can think of as having an internal vector space isomorphic to $V$ attached to each point of $M$, that is, with $\pi^{-1}(x)\simeq V$ for $x\in M$, but without a canonical decomposition $E\simeq M\times V$. Locally, i.e. for subsets $U\subset M$ we have an isomorphism $\pi^{-1}(U)\simeq U\times V$, but the isomorphism is not canonical: writing a field that is locally valued in $V$ requires a choice of trivialisation, as in footnote \ref{ftnt:PFB} above (the standard example here is vector fields on $M$: we can only write them as maps $U\rightarrow  \RR^n$ if we fix a basis for the tangent spaces). Thus $\varphi$ is a \textit{section} of $E$, i.e. $\varphi:M\rightarrow E$ such that $\pi\circ\varphi =\mathrm{Id}_M$, but locally, given a trivialization we can write it as function $\varphi:U\rightarrow V$. The trivialization of the vector bundle can be `soldered onto' the trivialization of the principal bundle, mentioned in footnote \ref{ftnt:PFB}, by thinking of $E$ as an `associated vector bundle'. The simplest way to think of this association is to see the elements $p\in P$ of the bundle as linear frames for $V$ (at $x\in \pi(p)$). This soldering guarantees that a gauge transformation on $P$ will also act on the matter fields in the appropriate manner, e.g. \eqref{eq:gauge_trans}.\label{ftnt:vector_bundle}} 

 To represent the forces that are sourced by $\varphi$, we take the collection of vector-valued one forms $A_\mu^a$, which take a vector of $M$ at a point of $U$ to $\mathfrak{g}$, with $\mu$ representing the spacetime components of the vector and  $a$ indicating the components in $\mathfrak{g}$.\footnote{What is a `force' and what is `matter' will be further distinguished by their transformation properties under a gauge transformation, in \eqref{eq:gauge_trans}. Matter transforms linearly, whereas forces acquire derivatives of the generator as inhomogeneous terms.} %We will drop the indices $a$ and $i$ except where relevant.
 These fields are associated with a dynamics by postulating a preferred real-valued action functional $S(\varphi_i, A_\mu^a)$, whose extremal values are postulated to provide the equations of motion.

We also  assume $G$ has some action (a representation) on $V$, the vector space of local field-values of the matter fields $\varphi$, defining this action pointwise as $g\cdot \varphi(x)=g(x)\cdot\varphi(x)\in V$. Let $t^{ij}_a$  be the $n$-dimensional Hermitean matrix representation on $V$ of $\mathfrak{g}$, i.e. $t:\mathfrak{g}\rightarrow GL(V)$,  where  the $a$ are indices of the Lie algebra space, in the domain of the map, and $i,j$ denote the matrix indices in the image of the map, acting linearly on $V$. Then we take the (malleable) gauge transformations, infinitesimally parametrized by $\epsilon\in \fG$, to act on our fundamental variables as:
\be\label{eq:gauge_trans}
\begin{cases}
\delta_\epsilon \varphi_i=\epsilon^a t_a^{ij}\varphi_j=(\epsilon t\varphi)_i \\
\delta_\epsilon A^a_\mu=\D_\mu\epsilon^a=\pp_\mu \epsilon^a+[\epsilon, A_\mu]^a
\end{cases}.
\ee
where the square brackets are the Lie algebra commutators. Here the $\pp_\mu  \epsilon^a$ on the right hand side of the second line echoes the $\nabla\phi$ of equation \eqref{eq:3vel-ga}, and the second term in that same side allows for a non-Abelian transformation group. These transformation rules are not as general as they could be, but neither are they arbitrary: they are the first-order terms of the Lie algebra action on the respective vector spaces --- in particular, `first-order' in the derivatives of $\epsilon$ and in powers of $A$ and $\varphi$ --- and in this sense provide an appropriate approximation of any malleable gauge transformation. We here focus on this special case, equation \ref{eq:gauge_trans}, only to simplify the presentation of our argument.

%It is important to note that the transformations above are not arbitrary: they correspond to the first possible terms of an action of the Lie-algebra on the respective vector spaces---first in both numbers of derivatives of $\epsilon$ and powers of $A$ and $\varphi$---and are, in the effective theory sense, natural. (This is one instance where we employ simplifying assumptions in order to relate to `theories on the ground'). 

Our aim now is to constrain how the matter fields $\varphi$ couple to force fields. Let $\mathcal{L}(\varphi, \pp \varphi, A, \pp A)$ be the Lagrangian defining our action $S(\varphi,A)$, which we assume for simplicity does not depend on higher-order derivatives.\footnote{This can be justified by appeal to Ostrogradsky's theorem; see \citet{swanson2019ostrogradsky} for a philosophical discussion.} Variation along the directions of the gauge transformations above yields (with summation convention on all indices):
\begin{align}\label{eq:vanishing}
  \begin{split}
\left( \frac{\delta \mathcal{L}}{\delta \varphi_i}(t^a\varphi)_i+ \frac{\delta \mathcal{L}}{\delta \pp_\mu\varphi_i}(t^a\pp_\mu\varphi)_i+\big[ \frac{\delta \mathcal{L}}{\delta A_\nu}, A_\nu]^a+\big[\frac{\delta \mathcal{L}}{\delta\pp_\nu A_\mu}, \pp_\mu A_\nu\big]^a\right)\epsilon_a+\\
\left( \frac{\delta \mathcal{L}}{\delta \pp_\mu\varphi_i}(t^a\varphi)_i+ \frac{\delta \mathcal{L}}{\delta A_\mu^a}+\big[ \frac{\delta \mathcal{L}}{\delta \pp_\nu A_\mu}, A_\nu]^a\right)\pp_\mu\epsilon_a+\\
\frac{\delta \mathcal{L}}{\delta \pp_\nu A^a_\mu}\pp_\mu \pp_\nu \epsilon^a=0\\
\end{split}
\end{align}
Since the derivatives of $\epsilon$ are functionally independent, this equation implies that \textit{each line must vanish separately}: the first line is a consequence of rigid symmetries, and the remaining two are of malleable ones. These are the fundamental constraints on the dynamics that we propose to analyse, and the task of the remainder of this paper will be to unpack them.

The requirement that each of these lines vanishes provides a strong constraint on the form of the Lagrangian, and hence on the dynamics. This, we claim, provides the core of the Noether gauge argument. To extract interesting physical information from this constraint, there are four sectors to compare, arising from the use of either rigid or malleable symmetries, and either $A$-independent or $A$-dependent Lagrangians. We treat each sector in turn. 

The results will be: a theory with rigid symmetries can be dynamically non-trivial and complete---i.e. it will not require further constraints---when $A$ does not figure in the Lagrangian. With malleable symmetries and no $A$-dependence, the constraints demand that the dynamics be  trivial, i.e. no kinetic term for the matter field can appear in the Lagrangian. When forces have their own dynamics, that is, when  the Lagrangian is $A$-dependent, a theory with rigid symmetries may be incomplete, and require further constraints to render the dynamics of $A$ compatible with charge conservation; an example will be given. It is only in the last case, where we have malleable symmetries and $A$-dependence, that the equations of motion coupling forces to charges is automatically consistent with the conservation of charges (and so no further constraints are required). Thus we will see the power of malleable symmetries and $A$-dependence together to secure an interacting dynamics that conserves charge. And this will be our Noether gauge argument. 

\subsection{$A$-independent, rigid symmetries}\label{4.3}

First, suppose we are as in the first step of the textbook gauge argument: there is no $A$ in sight, and the symmetry is rigid, so that  $\pp_\mu  \epsilon^a = 0 = \pp_\mu \pp_\nu  \epsilon^a$. Then the vanishing of the first line of Equation \eqref{eq:vanishing} reduces to
\begin{equation}\label{eq:first-line}
\frac{\delta \mathcal{L}}{\delta \varphi_i }(t^a\varphi)_i+ \frac{\delta \mathcal{L}}{\delta \pp_\mu\varphi_i}(t^a\pp_\mu\varphi)_i=0.
\end{equation}
But by the Euler-Lagrange equations $\mathsf{E}(\varphi)_i\approx 0$, where $\mathsf{E}(\varphi)_i= \frac{\delta \mathcal{L}}{\delta \varphi_i}-\pp_\mu  \frac{\delta \mathcal{L}}{\delta \pp_\mu\varphi_i}$, we have
\begin{equation}
  \frac{\delta \mathcal{L}}{\delta \varphi_i}\approx\pp_\mu  \frac{\delta \mathcal{L}}{\delta \pp_\mu\varphi_i},
\end{equation}
where we again are using `$\approx$' to denote `on-shell' equality. Applying this to Equation \eqref{eq:first-line} we find that
\be\label{eq:A_rig}
  \pp_\mu (\frac{\delta \mathcal{L}}{\delta \pp_\mu\varphi_i}(t^a\varphi)_i)=\pp^\mu J^a_\mu(\varphi)\approx 0
\ee
where we have defined the matter current as
\be\label{eq:J_phi}
  J^a_\mu(\varphi):=\frac{\delta \mathcal{L}}{\delta \pp_\mu\varphi_i}(t^a\varphi)_i.
\ee
In summary, we have derived what is guaranteed by Noether's first theorem, that the current $J^a_\mu(\varphi)$ is conserved on-shell. Or, turning this around: symmetry requires the Lagrangian to be restricted so that $J^a_\mu(\varphi)$ defined in Equation \eqref{eq:J_phi} is divergenceless. Having constrained the space of theories in this manner, there are no more equations to satisfy: conservation of charge is consistent with the dynamics and no further constraints need to be imposed. 

\subsection{$A$-independent, malleable symmetries}

In the next case, suppose that we allow ---in addition to Section \ref{4.3}'s equations --- the ones arising from a $\pp\epsilon\neq 0$, while still not allowing for an $A$ in the theory. We get, in addition to equations \eqref{eq:J_phi} and \eqref{eq:A_rig}, from the vanishing of the second line of Equation \eqref{eq:vanishing}:
\be\label{eq:noA_mal}
\frac{\delta \mathcal{L}}{\delta \pp_\mu\varphi_i}(t^a\varphi)_i=J^a_\mu(\varphi)=0. 
\ee
So here the  conserved currents are forced to vanish. Clearly this condition is guaranteed for all field values if $\frac{\delta \mathcal{L}}{\delta \pp_\mu\varphi_i}=0$, which requires a  vanishing kinetic term. A careful analysis of more general cases reveals this is the only generic solution.\footnote{For instance, assume $\frac{\delta \mathcal{L}}{\delta \pp_\mu\varphi_i}$ depends only on $\pp_\mu\varphi_i$, then since $t^a_{ij}\varphi^i$ can take any value, we must have $\frac{\delta \mathcal{L}}{\delta \pp_\mu\varphi_i}=0$. Now, suppose $\frac{\delta \mathcal{L}}{\delta \pp_\mu\varphi_i}$ depends on $\varphi_i$ as well. Since $\varphi_i$ has no spacetime indices to match the $\mu$ of the gradient $\pp_\mu\varphi_i$,   to make a Lagrangian scalar, we  would need the $\varphi_i$ contribution to this term to itself be a scalar, call it $F(\varphi)$. So for example: $\frac{\delta \mathcal{L}}{\delta \pp_\mu\varphi_i}= \pp_\mu\varphi_i(\varphi_j\varphi^j)$, or more generally $\frac{\delta \mathcal{L}}{\delta \pp_\mu\varphi_i}=F'(\pp \varphi)_{i\mu}F(\varphi)$ (where we raise indices with an inner product of $V$); and as in the example $F(\varphi)=\varphi_j\varphi^j=0$ iff $\varphi=0$. But then the same argument as before suffices, since we can still allow $t^a_{ij}\varphi^i$ to take any value in $V$ (for an appropriate,  non-zero value of the scalar formed just from $\varphi$, e.g. the contraction $\varphi_j\varphi^j$).  Or, in other words, for $\varphi\neq 0$,  $\frac{\delta \mathcal{L}}{\delta \pp_\mu\varphi_i}(t^a\varphi)_i=0$ iff $F^{-1}(\varphi)\frac{\delta \mathcal{L}}{\delta \pp_\mu\varphi_i}(t^a\varphi)_i=0$ where $F^{-1}(\varphi)\frac{\delta \mathcal{L}}{\delta \pp_\mu\varphi_i}$ depends only on $\pp_\mu\varphi$; and thus we are back to the first, simple  case.} 

This analysis pinpoints the obstacle appearing in  the textbook gauge argument that we rehearsed in Section \ref{subs:beware}. When the matter field Lagrangian has a non-trivial kinetic term, malleable transformations cannot be variational symmetries. That is: if we impose malleable symmetries without introducing a gauge potential, we cannot consistently also allow a term in the Lagrangian including  $\pp_\mu\varphi_i$. It is to allow such terms and still retain the malleable symmetries that the next two Sections will introduce the gauge potential.

\subsection{$A$-dependent, rigid symmetries}

We first proceed precisely as in the first case,  introducing the $A$ field, but still \textit{keeping the symmetries rigid}. 
Using the equations of motion for $A$ as well as those of $\varphi$, i.e. $\mathsf{E}[A]=0$ as well as  $\mathsf{E}[\varphi]=0$, we get,   in direct analogy to \eqref{eq:A_rig}, a conserved current that is a sum of two currents:\footnote{To be explicit, the $A$-dependent terms that appear in the first line of \eqref{eq:vanishing} are $\big[ \frac{\delta \mathcal{L}}{\delta A_\nu}, A_\nu]^a+\big[\frac{\delta \mathcal{L}}{\delta\pp_\nu A_\mu}, \pp_\mu A_\nu\big]^a\approx \big[\pp_\mu \frac{\delta \mathcal{L}}{\delta\pp_\nu A_\mu}, A_\nu]^a+\big[\frac{\delta \mathcal{L}}{\delta\pp_\nu A_\mu}, \pp_\mu A_\nu\big]^a=\pp_\mu\big[ \frac{\delta \mathcal{L}}{\delta \pp_\nu A_\mu}, A_\nu]^a$. } 
\be\label{eq:conserv} \pp_\mu \left(\frac{\delta \mathcal{L}}{\delta \pp_\mu\varphi_i}(t^a\varphi)_i+\big[ \frac{\delta \mathcal{L}}{\delta \pp_\nu A_\mu}, A_\nu]^a\right)=\pp^\mu (J^a_\mu(\varphi)+\tilde J^a_\mu(A))\approx 0
\ee
and nothing more; there are  no further conditions that the terms of the Lagrangian need to obey.  (Here, the definition of $\tilde J^a_\mu(A))$ is given by \eqref{eq:conserv}.)

So, unlike the previous case, which admitted only a trivial kinetic term for the matter field $\varphi$, this sector will admit many possible dynamics. The problem here is of a different nature: the theories are not sufficiently constrained; the equations of motion do not automatically guarantee conservation of charges.

Let us look at an example of how things can go wrong in this intermediate  sector containing forces but only rigid symmetries, for the simple, Abelian theory. In the Abelian theory, $\tilde J(A)\equiv 0$, since quantities trivially commute. Thus Equation \eqref{eq:conserv} only contains the standard conservation of the matter charges and the symmetries are silent about the relationship between this charge and the dynamics of the forces. 

Consider a kinetic term of the form $\pp_{(\mu}A_{\nu)}\pp^{(\mu}A^{\nu)}$ where round brackets denote symmetrization. So this differs from the standard Maxwell theory kinetic term for the gauge potential: namely, $F_{\mu\nu}F^{\mu\nu}:=\pp_{[\mu}A_{\nu]}\pp^{[\mu}A^{\nu]}$ where square brackets denote \textit{anti}-symmetrization. But the symmetrized version is nonetheless gauge-invariant (under {\em rigid} transformations). Now, the Euler-Lagrange equations for this theory differ only very slightly from the Maxwell-Klein-Gordon equations. The equations of motion for $A$ yield:
\be\label{eq:wrongL}
\pp^\mu (\pp_{(\mu}A_{\nu)})=J_\nu
\ee
 in contrast with the usual $\pp^\mu (\pp_{[\mu}A_{\nu]})=J_\nu$. 
But clearly, unlike the usual case, the divergence of the left hand side does \textit{not} automatically vanish:
\be\label{eq:counter_exA}
\pp^\nu\pp^\mu (\pp_{(\mu}A_{\nu)})=\pp^\mu\pp_\mu\pp^\nu A_\nu=\square\pp^\nu A_\nu\not\equiv0.
\ee
At this point, we would have to go back to the drawing board and introduce more constraints on the theory: this theory does not couple forces to charges in a manner that guarantees charge conservation.

Thus we glimpse our overall thesis: only by introducing malleable gauge symmetries do we restrict interactions between forces and their sources so that they are consistent with the conservation of the matter current. 

Of course, in this example it is easy to see what is the smoking gun: the kinetic term $\pp_{(\mu}A_{\nu)}\pp^{(\mu}A^{\nu)}$ is \textit{not} invariant under malleable transformations. According to the next Section---our fourth sector---requiring this stronger form of invariance will restrict us to the space of consistent interactions.  No tweaking required.

\subsection{$A$-dependent, malleable symmetries}
In this fourth sector, we again obtain \eqref{eq:conserv}, from the vanishing of the first line of \eqref{eq:vanishing}, since nothing changes at that level. But, from the vanishing of the second line in Equation \eqref{eq:vanishing}, we have:
\be
 -\frac{\delta \mathcal{L}}{\delta A_\mu^a}=\frac{\delta \mathcal{L}}{\delta \pp_\mu\varphi_i}(t^a\varphi)_i+\big[ \frac{\delta \mathcal{L}}{\delta \pp_\nu A_\mu}, A_\nu]^a=J^a_\mu(\varphi)+\tilde J^a_\mu(A).
\ee
Once again using the Euler-Lagrange equations for $A$, to substitute the left-hand side, we find that
\be \mathsf{E}(A)^a_\mu=\frac{\delta \mathcal{L}}{\delta A_\mu^a}-\pp_\nu\frac{\delta \mathcal{L}}{\delta \pp_\nu A^a_\mu}\approx 0.
\ee
Defining $\frac{\delta \mathcal{L}}{\delta \pp_\nu A^a_\mu}=:k_{\mu\nu}^a$, we now obtain: 
\be\label{eq:final_conserv} J^a_\mu(\varphi)+\tilde J^a_\mu(A) = -\pp^\mu k_{\mu\nu}^a+  \mathsf{E}(A)^a_\mu\approx -\pp^\mu k_{\mu\nu}^a
\ee
This equation links both the matter and force currents to the dynamics of the force field. 

We already know from the vanishing in the first line of Equation \eqref{eq:vanishing} that the sum of the currents is divergence-free on shell (cf.\ Equation \ref{eq:conserv}). Thus, taking the divergence on the left hand side of \eqref{eq:final_conserv}, we must have $\pp^\nu \pp^\mu k_{\mu\nu}^a= 0$. Since two derivatives of a scalar field are necessarily symmetric, all we need in order to satisfy conservation is that:
\be
k_{\mu\nu}^a=-k_{\nu\mu}^a\quad \text{or}\quad k_{\mu\nu}^a=k_{[\nu\mu]}^a,
\ee
 which is just what we have from the vanishing of the \emph{third} line of Equation \eqref{eq:vanishing}. Thus, the result of including malleable symmetries, in this simple case, restricts us to consider Lagrangians in which the derivatives of $A^a_\mu$  only enter in anti-symmetrized form: $\pp_{[\mu}A^a_{\nu]}$. This restriction excludes the previous example of equation \eqref{eq:wrongL}. 

More generally, if we try to find a Lagrangian that includes  force fields without obeying the relations obtained from the malleable symmetries, the equations of motion of the force fields and those relating force fields and matter may require further constraints to be compatible with charge conservation, as we saw in the counter-example in the previous section. This is the power of local gauge symmetries: they link charge conservation --- taken as empirical fact or on a priori grounds --- with the form of the Lagrangian for the force fields. 

%{\old Having seen how it works out explicitly, the point made in the introduction bears repetition: suppose we only have empirical access to charges and forces, and want to  concoct an interacting, dynamical theory for these two types of objects in the world. The most economical way to ensure the ensuing dynamical interactions between the two preserve the conservation laws for the charges is to obtain our dynamical laws from variations of a Lagrangian of a certain form. The form is required to obey certain constraints, which we find by first associating the conservation of the matter charges to a rigid symmetry and then loosening the rigid transformations, making them malleable, and finally enforcing them as symmetries of the Lagrangian as well. \hg Too repetitive.}

\subsection{Masslessness: An invitation}
There is yet more information that can be gleaned from the Noether gauge argument, which is contained in equation \eqref{eq:final_conserv}: upon integration, it yields a boundary term and a volume integral. That is, it gives  a relation between a quantity at a far-away boundary --- related to the flux of the force field components $A^a_\mu$ --- and the matter content inside this region. In the Abelian case, for the $0$th component of the equations of motion, this just gives the standard Gauss law. But more generally, being detectable at arbitrarily long distances makes the `forces' associated to $k_{\mu\nu}$ long-range.\footnote{This is a classical treatment. Quantum mechanically, non-Abelian theories would suffer from confinement, which lies outside the scope of this discussion.} 

It is common to conclude\footnote{For example, in the context of quantum electrodynamics, compare \citep[p. 343]{weinbergQFTv1}.} on this basis that gauge invariance \emph{forbids the presence of a mass term} for $A_\mu$. However, like the textbook gauge argument, the general form of this argument for masslessness is often heuristic in character. In particular, it assumes that each term in the Lagrangian is independently gauge-invariant. Then it is true that, on its face, a term like  $m^2 A^a_\mu A_a^\mu$ is not invariant under malleable symmetries.\footnote{Note that the Proca action does include a mass term for the photon field, i.e. the gauge potential $A$, but it is only gauge-invariant with $m=0$, in which case it just reduces to the standard Maxwell equations. For $m\neq 0$, one must have, in relation to the Maxwell equations, gauge-breaking, or `gauge-fixing', conditions. For a discussion, see \citet[\S 3-2-3]{ItzyksonZuber}.} 

To show in full generality that masslessness is required would go beyond the scope of this paper: we leave it as an exercise to the ambitious reader to explore! However, we can still improve on the standard heuristic argument without much effort in a special case that includes electromagnetism, by enforcing Equation \eqref{eq:vanishing} off-shell, for all models $(\varphi, A)$, and requiring that any mass term $m$ be field-independent and that the only coupling between the matter field and $A$ be just $A_\mu^a J^\mu_a$. 

A term in the Lagrangian of the form $m^2 A^a_\mu A_a^\mu$ would not leave any trace in the first line of Equation \eqref{eq:vanishing}.   But from the second line, again assuming no on-shell constraint between values of the matter and force field, from the second term, we obtain $m^2A_\mu^a$, which can only cancel with something dropping out of $\big[ \frac{\delta \mathcal{L}}{\delta \pp_\nu A_\mu}, A_\nu]^a$. Call this term $\kappa_{\mu\nu}(A, \pp A)$, which is such that $[\kappa_{\mu\nu}, A^\mu]^a\propto A^a_\nu$ for all $A$. Take, in this basis, $A^a_\nu$ to be constant, e.g. $\delta^a_{1}\delta_\nu^{y}$, so that $A=\tilde A:=\sigma_1\otimes dy$, where $y$ is a spacetime coordinate function and $\sigma_1$ is one element of the Lie-algebra basis.  This implies that the partial derivatives inside $\kappa_{\mu\nu}(\pp A, A)$ vanish, that is, $\kappa_{\mu\nu}(\pp \tilde A, \tilde A)=\kappa_{\mu\nu}(\tilde A)$, and therefore that it is a polynomial of $A$ with no derivatives, i.e. it is a polynomial that contains a single element of the  Lie-algebra basis, $\sigma_1$, and the spacetime 1-form basis, $dy$. But this means that the commutator $[\kappa_{\mu\nu}, \tilde A^\mu]^a$ vanishes, and therefore cannot be proportional to $\tilde A$. As a result, in order to maintain off-shell invariance under malleable transformations, a mass term for $A$ cannot be included in the Lagrangian.

 %{\old From \eqref{eq:full_covconserv}, we see that a covariant conservation law would require, at least on-shell, that: $$[F_{\mu\nu}, k^{\mu\nu}]\approx 0. $$ Thus, if from the non-covariant conservation law we knew the kinetic term depended only on $\pp_{[\mu}A_{\nu]}$, now the covariant conservation law suggests (??), in general that  $$ \frac{\delta \mathcal{L}}{\delta \pp_\nu A^a_\mu}\approx H(A)F_{\mu\nu}^a.$$ for some function $H(A)$, and thus suggests that the kinetic term can be written as some polynomial $p(F^2)$. 

%\hg I do notreally think we should include the text in red... I was just following my pencil. In general, for non-Abelian theories, is this an extra-imposition on the kinetic term (that would otherwise come from simplyfying the contractions of structure constants, etc, in the Lagrangian, of course... and if they did, what is the interest of including this? A shortcut to a more complicated calculation?)? Maybe check in the three equations above what happens with an assumption of minimal coupling?  this part seems way  too tentative and not going anywhere to remain in the paper (this is my intuition now).}%This equation would hold algebraically if the Euler-Lagrange Lie-algebra valued one-form $\mathbf{E}$ (or $E^a_\mu$) were an exact co-differential, i.e. if $\mathbf{E}=*\D*\mathbf{k}$, for a Lie-algebra-valued two-form $\mathbf{k}$, or $k^a_{\mu\nu}$. Of course, this is the case for Yang-Mills, where $k^a_{\mu\nu}=F^a_{\mu\nu}$. 

\section{Conclusion}\label{sec:conclusion}

We have given a detailed defence of the use of gauge symmetries for theory-building, in the spirit of the textbook gauge argument.

We first showed how the textbook gauge argument can be tightened into a rigorous theorem in quantum mechanics, which begins with the assumption that malleable transformations `appropriately' transform velocity, and concludes that the dynamics must be given by the minimally coupled Hamiltonian (Section \ref{sec:ga-as-a-thm} and the \hyperref[sec:appendix]{Appendix}).

We then went on to defend a much more general `Noether gauge argument' in classical field theory. In particular, gauge symmetries of various kinds were fed into the powerful theorems of Emmy Noether, in order to produce precise constraints on the possible dynamics. The result is more than a simple argument that gauge symmetry is useful for theory construction: gauge symmetry constrains how one can consistently combine charges with the fields they interact with. Noether's first theorem of course implies charge conservation; but the second theorem then implies relations between the theory's equations of motion, which amounts to a coupling constraint. In other words, converting a rigid symmetry into a malleable one enforces the compatibility between charge conservation and the dynamics of the corresponding fields: a result which has not yet been stressed by the philosophical literature.

Of course, one may still feel that gauge redundancies are like  Wittgenstein's ladder: once our theories have been successfully constructed, why not throw gauge symmetries away, and move down to a description in which such redundancies have been eliminated? We would reply: on the contrary, the ladder remains invaluable in the interpretation of gauge theories. In addition to the several reasons for gauge identified in the Introduction to this paper, we find that gauge symmetries provide an explanatory reason --- a reason drawing heavily on Noether's two theorems --- for the way charges couples to fields.
 
\subsection*{Acknowledgements}
We would like to thank Adam Caulton, Neil Dewar, Caspar Jacobs, Ruward Mulder, James Read, and an anonymous referee for valuable comments.

\section*{Appendix}\label{sec:appendix}

\GAhamiltonian*

\begin{proof}
We first note that if $W_\phi:=e^{i\phi}:=e^{i\phi(Q)}$, then
\begin{align}\label{eq:qp-rules}
  W_\phi QW^*_\phi = Q, && W_\phi PW^*_\phi = P - \nabla\phi.
\end{align}
The second equation follows from the observation that $PW_{\phi}^*\psi = -i\nabla \left(e^{-i\phi}\psi\right) = -i\left(e^{-i\phi}\nabla\psi - i(\nabla \phi)e^{-i\phi}\psi\right) = e^{-i\phi}(-i\nabla - \nabla \phi)\psi = W_\phi^*(P - \nabla\phi)\psi$, for all $\psi\in L^2(\RR^3)$. From the second equation, and our assumption that $W_\phi\dot{Q}W^*_\phi = \dot{Q}-\nabla\phi$, it follows that $[W_\phi,m\dot{Q}-P]=0$ for any $m\neq0$. Since this holds in particular when $\phi(x)=ax$ for all $a\in\RR$, it follows that $m\dot{Q}-P$ commutes with $Q$ \citep[Proposition 5.9.3]{BlankExnerHavlicek}. But the position operator $Q=x$ in the Schr\"odinger representation has a simple spectrum \citep[Example 5.8.2]{BlankExnerHavlicek}, which is to say that its commutant $\{Q\}'$ is equal to its bicommutant $\{Q\}''$. (This is the continuous spectrum analogue for $Q$ of all of a quantity's eigenvalues being non-degenerate.) By von Neumann's bicommutant theorem \citep[Theorem 5.5.6]{BlankExnerHavlicek} this implies that there is a function $A_r$ (for each $r=1,2,3$) of position alone such that,
\begin{equation}\label{eq:A}
  A_r(x) = P_r - m\dot{Q}_r.
\end{equation}

Using the fact that $A$ depends only on position and thus commutes with $Q$, we can now observe by direct calculation that for each $r = 1,2,3$: 
\begin{align} 
  \begin{split}
    [(P - A)^2,Q] & = \underbrace{[P^2, Q]}_{-2iP} - [PA,Q] - [AP,Q] + \underbrace{[A^2,Q]}_{0}\\
       & = -2iP - (P\underbrace{AQ}_{=QA} - QPA + APQ - \underbrace{QA}_{=AQ}P)\\
       & = -2iP - (\underbrace{[P,Q]}_{-i}A) + A\underbrace{[P,Q]}_{-i})\\
       & = -2i(P - A),
   \end{split}
\end{align}
and thus $i[\tfrac{1}{2m}(P_r-A_r)^2,Q_r] = \tfrac{1}{m}(P_r-A_r)$. But also, $i[H,Q_r] = \dot{Q}_r = \tfrac{1}{m}(P_r - A_r)$, where the second equality applies Equation \eqref{eq:A}, so $[H- \tfrac{1}{2m}(P_r-A_r)^2, Q_r] = 0$. By the simple spectrum property it again follows that there is a function $V$ of position alone such that $H - \sum_r\tfrac{1}{2m}(P_r - A_r)^2 = V$, which establishes Equation \eqref{eq:ham-form}. 

To confirm Equation \eqref{eq:gauge}, let $\psi(t) := \tilde{U}_t\psi := W_{\phi}U_t =: e^{-itH_\phi}\psi$ for each $\psi\in L^2(\RR)$. Then on the one hand we have that, 
\begin{equation}\label{eq:diff1}
  \tfrac{d}{dt}\psi(t) = -iH_\phi e^{-itH_\phi}\psi = -iH_\phi\psi(t).
\end{equation}
On the other hand, using our definition $\tilde{U}_t\psi := W_\phi U_t\psi$, we have that,
\begin{align}\label{eq:diff2}
  \begin{split}
    \tfrac{d}{dt}\psi(t) & = \left( W_\phi(\tfrac{d}{dt}U_t) + (\tfrac{d}{dt}W_\phi)U_t \right)\psi = -i(W_\phi HU_t - \tfrac{d\phi}{dt}W_\phi U_t) \psi\\
    & = -i\big(W_\phi H W^*_\phi - \tfrac{d\phi}{dt}\big)(W_\phi U_t)\psi\\
    & = -i\left(W_\phi H W^*_\phi - \tfrac{d\phi}{dt}\right)\psi(t).\\      
  \end{split}
\end{align}
Equations \eqref{eq:diff1} and \eqref{eq:diff2} together imply $H_\phi = W_\phi H W^*_\phi - \tfrac{d}{dt}\phi$. Moreover, since $W_\phi$ is a function of $Q$ and so commutes with functions of $Q$, it follows that $W_\phi (P-A)W^*_\phi = P - \nabla\phi - A = P - (A + \nabla\phi)$. Therefore, using Equation \eqref{eq:ham-form} and rearranging, we find that
\begin{equation}
  H_\phi = W_\phi H W^*_\phi - \tfrac{d\phi}{dt} = \sum_{r=1}^3\tfrac{1}{2m}\left(P_r - (A_r+\nabla\phi)\right)^2 + (V-\tfrac{d\phi}{dt}) \, ;
\end{equation}
which proves Equation \eqref{eq:gauge} and  claim (ii). 

The final claim (iii) is now verified by confirming that if $i\tfrac{d}{dt}\psi = H\psi$ and $\psi_\phi:=W_\phi\psi$, then we have that $i\tfrac{d}{dt}\psi_\phi \equiv i\tfrac{d}{dt}(W_\phi\psi) = W_\phi(-\tfrac{d\phi}{dt}\psi + i\tfrac{d}{dt}\psi) = W_\phi(H - \tfrac{d\phi}{dt})\psi = (W_\phi H W_\phi^* - \tfrac{d\phi}{dt})W_\phi\psi = H_\phi\psi_\phi$.
\end{proof}
 
\setstretch{1.0}
%\bibliographystyle{dcu}
%\bibliography{/Users/robert49/Dropbox/ElectronicLibrary/MasterBibliography.bib}

\end{document}